\newtheorem{theor}{Theorem}
\newtheorem{lem}{Lemma}
\newcommand{\ave}[1]{\langle#1\rangle}
\providecommand{\openone}{\leavevmode\hbox{\small1\kern-3.8pt\normalsize1}}
\begin{document}

\title{Nonlocality threshold for entanglement under general dephasing evolutions: A case study}

\author{Rosario Lo Franco}
\affiliation{Dipartimento di Energia, Ingegneria dell'Informazione e Modelli Matematici, Universit\`{a} di Palermo, Viale delle Scienze, Ed. 9, 90128 Palermo, Italy}
\email{rosario.lofranco@unipa.it}

\begin{abstract}
Determining relationships between different types of quantum correlations in open composite quantum systems is important since it enables the exploitation of a type by knowing the amount of another type. We here review, by giving a formal demonstration, a closed formula of the Bell function, witnessing nonlocality, as a function of the concurrence, quantifying entanglement, valid for a system of two noninteracting qubits initially prepared in extended Werner-like states undergoing any local pure-dephasing evolution. This formula allows for finding nonlocality thresholds for the concurrence depending only on the purity of the initial state. We then utilize these thresholds in a paradigmatic system where the two qubits are locally affected by a quantum environment with an Ohmic class spectrum. We show that steady entanglement can be achieved and provide the lower bound of initial state purity such that this stationary entanglement is above the nonlocality threshold thus guaranteeing the maintenance of nonlocal correlations.     
\end{abstract}

\maketitle

\section{Introduction}
Quantum nonlocality and entanglement identify two kinds of quantum correlations that are at the foundations of quantum mechanics and play an essential role in quantum information theory \cite{amico2008RMP,horodecki2009RMP,brunnerRMP}. Although for pure states of a bipartite system the two concepts coincide, for mixed states the presence of entanglement does not imply in general nonlocality, the latter being a stronger quantumness property than entanglement \cite{brunnerRMP,werner}. In many scenarios a quantum system exists in mixed states, particularly during its dynamics due to the interaction with the surrounding environment which induces decoherence and therefore mixedness \cite{petru,rivasreview}. Moreover, the environmental effects are usually detrimental for entanglement and nonlocality in a configuration of separated noninteracting qubits embedded in independent local environments \cite{lofrancoreview,aolitareview}. Such a situation is typical for quantum communication and information protocols where the supports of the single-particle wave functions are centered around sufficiently far locations or the particles are separated by an energy barrier large enough, as in experiments with photons in different optical modes or with strongly repelling trapped ions \cite{obrienreview,cirac2001PRA,xu2013,darrigoAOP,orieux2015}. It is thus useful to find dynamical equations that relate these two kinds of quantum correlations, for instance in order to acquire information on the presence of nonlocality just by looking at the quantum entanglement of the overall system. 

The possible existence of closed relations between quantifiers of entanglement and nonlocality is a subject of special interest in dynamical contexts \cite{mazzolapalermo2010PRA,horst,Bartkiewicz}. Entanglement can be quantified for an arbitrary two-qubit state by the concurrence $C(t)$ \cite{Wootters98}. 
Nonlocality is instead unambiguously identified if a Bell inequality is violated. The Bell function $\mathcal{B}$, as defined by the Clauser-Horne-Shimony-Holt (CHSH) form of the Bell inequality \cite{brunnerRMP}, can be then used to seek whether the system exhibits nonlocal correlations which occur with certainty if $\mathcal{B}>2$.
An important goal within this context is to prove the existence of threshold values of concurrence which ensure nonlocal quantum correlations during a given evolutions whenever $C(t)$ stays above that threshold. In fact, apart its basic interest, this behavior would permit to utilize the system state for quantum information processes relying on nonlocality, as device-independent and security-proof quantum key distribution protocols \cite{brunnerRMP,gisinnatphot}.
It has been recently reported that Bell function and concurrence satisfy a closed dynamical relation in the case of general pure-dephasing evolution for each qubit, provided that the qubits are initially prepared in the class of extended Werner-like states \cite{lofrancoPRB}. This result has been then utilized to verify the preservation of nonlocality by dynamical decoupling pulse sequences in the presence of low-frequency noise typical of the solid state \cite{lofrancoPRB}. 

The aim of this work is to review the pure-dephasing relation between concurrence and Bell function, giving a formal demonstration of it. Successively, we apply the result to a paradigmatic system made of two independent qubits locally interacting with pure-dephasing noise sources having Ohmic class spectrum. By exploiting the trapping of the single-qubit coherence which occurs in such a system \cite{addisPRA}, we investigate the possibility to achieve a stationary entanglement which ensures the presence of nonlocal correlations. In particular, we find the lower bound of the purity of the initial two-qubit state such as to guarantee that the stationary concurrence is above the nonlocality threshold.

\section{Preliminaries: Bell function and concurrence}
In order to witness nonlocality we use here the CHSH form of Bell inequality, which is the most suitable for an experimental test of nonlocality in bipartite quantum systems composed by two-level (spin-like) systems, or qubits \cite{brunnerRMP}. Let the operator $\mathcal{O}_S=\mathcal{O}_S(\theta_S,\phi_S)$ be a spin observable with eigenvalues $\pm1$ associated to the spin particle $S=A,B$ defined by $\mathcal{O}_S=\textbf{O}_S\cdot\bm{\sigma}_S$, where
$\textbf{O}_S\equiv(\sin\theta_S\cos\phi_S,\sin\theta_S\sin\phi_S,\cos\theta_S)$ is the unit vector indicating an arbitrary direction in the spin space and $\bm{\sigma}_S=(\sigma_x^S,\sigma_y^S,\sigma_z^S)$ is the Pauli matrices vector. The CHSH-Bell inequality associated to a two-qubit state $\rho$ is  $\mathcal{B}(\rho)\leq2$, where $\mathcal{B}(\rho)$ is the Bell function defined as \cite{brunnerRMP,bellomo2010PLA}
\begin{equation}\label{CHSHBellinequality}
\mathcal{B}(\rho)=|\ave{\mathcal{O}_A\mathcal{O}_B}+\ave{\mathcal{O}_A\mathcal{O}'_B}
+\ave{\mathcal{O}'_A\mathcal{O}_B}-\ave{\mathcal{O}'_A\mathcal{O}'_B}|,
\end{equation}
where $\ave{\mathcal{O}_A\mathcal{O}_B}=\mathrm{Tr}\{\hat{\rho}\mathcal{O}_A\mathcal{O}_B\}$ is the correlation function of observables $\mathcal{O}_A$, $\mathcal{O}_B$ and $\mathcal{O}'_S\equiv\mathcal{O}_S(\theta'_S,\phi'_S)$. If, given the state $\rho$, it is possible to find a set of angles $\{\theta_A,\theta'_A,\theta_B,\theta'_B\}$ and $\{\phi_A,\phi'_A,\phi_B,\phi'_B\}$ such that the CHSH-Bell inequality is violated ($\mathcal{B}(\rho)>2$), then the correlations are nonlocal or, in other words, the system state exhibits nonlocality.
A procedure to obtain the maximum of $\mathcal{B}(\rho)$ and the corresponding angles for an arbitrary two-spin-$1/2$ mixed state is well known \cite{horodecki1995PLA}. Using it, one finds $\mathcal{B}_\mathrm{max}(\rho)=2\sqrt{\mathrm{max}_{j>k}\{u_j+u_k\}}$, where $j,k=1,2,3$ and $u_j$'s are three quantities depending on the state.

Entanglement for an arbitrary state $\rho$ of two qubits is quantified by concurrence \cite{amico2008RMP,Wootters98}
\begin{equation}
\mathcal{C}_{AB}=\mathcal{C}(\rho)=\textrm{max}\{0,\sqrt{\lambda_{1}}-\sqrt{\lambda_{2}}-\sqrt{\lambda_{3}}-\sqrt{\lambda_{4}}\},
\end{equation} 
where $\lambda_{i}$ ($i=1,\ldots,4$) are the eigenvalues in decreasing order of the matrix $\rho(\sigma_{y}\otimes\sigma_{y})\rho^{\ast}(\sigma_{y}\otimes\sigma_{y})$, with $\sigma_{y}$ denoting the second Pauli matrix and $\rho^{\ast}$ corresponding to the complex conjugate of the two-qubit density matrix $\rho$ in the standard computational basis $\mathcal{B}=\{\ket{1}\equiv\ket{11},\ket{2}\equiv\ket{10},\ket{3}\equiv\ket{01},\ket{4}\equiv\ket{00}\}$.

Let us take now the two-qubit states whose density matrix, in the basis $\mathcal{B}$, has a X structure as
\begin{equation}\label{Xstatesdensitymatrix}
   \hat{\rho}_X = \left(
\begin{array}{cccc}
  \rho_{11} & 0 & 0 & \rho_{14}  \\
  0 & \rho_{22} & \rho_{23} & 0 \\
  0 & \rho_{23}^* & \rho_{33} & 0 \\
  \rho_{14}^* & 0 & 0 & \rho_{44} \\
\end{array}
\right).
\end{equation}
This class of states is sufficiently general to include Bell states (pure two-qubit maximally entangled states) and Werner states (mixture of Bell states with white noise) \cite{bellomo2008PRA}. A remarkable aspect of the X states is that, under various kinds of dynamics, the initial X structure is maintained during the evolution \cite{bellomo2008PRA,bellomo2007PRL}. Using the above criterion to obtain $\mathcal{B}_{\mathrm{max}}$ \cite{horodecki1995PLA}, the three quantities $u$'s, in terms of the density matrix elements, are found to be
\begin{eqnarray}\label{uXstate}
u_1&=&4(|\rho_{14}|+|\rho_{23}|)^2,\quad
u_2=(\rho_{11}+\rho_{44}-\rho_{22}-\rho_{33})^2,\nonumber\\ 
u_3&=&4(|\rho_{14}|-|\rho_{23}|)^2,
\end{eqnarray}
as already reported \cite{derkacz2005PRA}. Being $u_1$ always larger than $u_3$, the maximum of Bell function for X states results to be $\mathcal{B}_\mathrm{max}(\hat{\rho})=2\sqrt{u_1+\mathrm{max}_{j=2,3}\{u_j\}}$, that is
\begin{eqnarray}\label{BB1B2}
\mathcal{B}_\mathrm{max}&=&\mathrm{max}\{\mathcal{B}_1,\mathcal{B}_2\},\quad 
\mathcal{B}_1=2\sqrt{u_1+u_2},\nonumber\\  \mathcal{B}_2&=&2\sqrt{u_1+u_3}.
\end{eqnarray}

The expression of concurrence for X states is given by \cite{lofrancoreview}
\begin{eqnarray}\label{concxstate}
\mathcal{C}_\rho^X(t)&=&2\mathrm{max}\{0,|\rho_{23}(t)|-\sqrt{\rho_{11}(t)\rho_{44}(t)},\nonumber\\
&&|\rho_{14}(t)|-\sqrt{\rho_{22}(t)\rho_{33}(t)}\}.
\end{eqnarray}

\section{Relation between Bell function and concurrence under pure-dephasing}
In this Section we review the relation between quantifiers of entanglement and nonlocality for two noninteracting qubits, $A$ and $B$, locally subject to any pure-dephasing interaction with the environment. We start by defining the general form of the Hamiltonian and the relevant initial states of the two-qubit system. 

The dynamics of each qubit is governed by the pure-dephasing Hamiltonian ($\hbar=1$)
\begin{equation}\label{pure-dephasingH}
H= {\Omega \over 2}\sigma_z  + {\hat{\chi} \over 2} \sigma_z + \hat{H}_R,
\end{equation}
where $\Omega$ is the Bohr frequency of the qubit and $\hat{\chi}$ represents an arbitrary environmental operator
coupled to the same qubit. The free evolution of the environment is included in $\hat{H}_R$. The overall Hamiltonian of the two-qubit system is 
thus $H_\mathrm{tot}=H_A+H_B$, with each $H_S$ ($S=A,B$) given by Eq.~(\ref{pure-dephasingH}). 

The two qubits are supposed to be prepared in an EWL state \cite{bellomo2008PRA}
\begin{equation}\label{EWLstates}
   \rho_1=r \ket{1_{a}}\bra{1_{a}}+\frac{1-r}{4}\openone_4,\quad
    \rho_2=r \ket{2_{a}}\bra{2_{a}}+\frac{1-r}{4}\openone_4,
\end{equation}
where the pure parts $\ket{1_{a}}=a\ket{01}+b\ket{10}$ and $\ket{2_{a}}=a\ket{00}+b\ket{11}$ are, respectively, 
the one-excitation and two-excitation Bell-like states with $a^2+|b|^2=1$ ($a$ is real) and $\openone_4$ is the $4\times4$ identity matrix of the two-qubit Hilbert space. The EWL states are a subclass of the X states of Eq. (\ref{Xstatesdensitymatrix}) and when $a=b=1/\sqrt{2}$ they reduce to
the Werner states, a subclass of Bell-diagonal states \cite{horodecki2009RMP}. The purity parameter $r\in[0,1]$ is a measure of the purity of EWL states which is given by $P=\mathrm{Tr}(\rho^2)=(1+3r^2)/4$. 
The two EWL states of Eq.~(\ref{EWLstates}) have the same value of concurrence 
$\mathcal{C}_{\rho_1}(0)=C_{\rho_2}(0)=2\mathrm{max}\{0,(|ab|+1/4)r-1/4\}$. Initial states are thus entangled for $r>\bar{r}=(1+4|ab|)^{-1}$. 

We can now claim the following theorem.

\begin{theor} \label{TH1}
For noninteracting qubits starting from an entangled EWL state ($r>\bar{r}$) and locally subject to any pure-dephasing noise, the Bell function and the concurrence satisfy the dynamical relation \cite{lofrancoPRB}
\begin{equation}\label{BversusC}
\mathcal{B}(t)=2\sqrt{r^2+[\mathcal{C}(t)+(1-r)/2]^2}.
\end{equation}
\end{theor}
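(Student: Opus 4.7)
The plan is to propagate each EWL state forward under local pure dephasing, exploit the fact that X structure is preserved, then plug into the formulas for $\mathcal{B}_{\max}$ and $\mathcal{C}$ from Eqs.~(\ref{BB1B2}) and (\ref{concxstate}) and algebraically eliminate the surviving coherence. Concretely, I would first note that a single-qubit pure-dephasing Hamiltonian of the form in Eq.~(\ref{pure-dephasingH}) commutes with $\sigma_z$, so in the interaction picture each qubit's populations are conserved while the off-diagonal element $\ket{0}\bra{1}$ picks up a (complex) dephasing factor $D_S(t)$ with $|D_S(t)|\le 1$. Applied to the two-qubit EWL states, this leaves the diagonal entries $\rho_{ii}$ invariant and multiplies the only surviving coherences by the environmental factors: for $\rho_1$ only $\rho_{23}(t)=r\,a b^*\,D_A(t)D_B^*(t)$ survives, while for $\rho_2$ only $\rho_{14}(t)=r\,a b^*\,D_A(t)D_B(t)$ survives. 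In both cases the evolved state is still an X state, so the formulas from the Preliminaries apply.

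Next I would evaluate $u_1,u_2,u_3$ in Eq.~(\ref{uXstate}) on these evolved states. Using the fixed diagonals, one finds $\rho_{11}+\rho_{44}-\rho_{22}-\rho_{33}=\pm r$ (the sign depending on $\rho_1$ versus $\rho_2$), hence $u_2=r^2$ in either case. Since only one of $|\rho_{14}|,|\rho_{23}|$ is nonzero, $u_1=u_3=4\kappa^2$, where I write $\kappa(t)\equiv|\rho_{23}(t)|$ for $\rho_1$ and $\kappa(t)\equiv|\rho_{14}(t)|$ for $\rho_2$. The key inequality I need is $\mathcal{B}_1\ge\mathcal{B}_2$ for all $t$, equivalently $r^2\ge 4\kappa^2$; this follows from $\kappa(t)\le r|ab|\le r/2$ (since $|ab|\le 1/2$ under $a^2+|b|^2=1$) together with $|D_A D_B^{(*)}|\le 1$. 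Consequently $\mathcal{B}(t)=\mathcal{B}_1=2\sqrt{r^2+4\kappa(t)^2}$.

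For the concurrence I would apply Eq.~(\ref{concxstate}) to the evolved density matrix. For $\rho_1$, the diagonals satisfy $\sqrt{\rho_{11}\rho_{44}}=(1-r)/4$ and $|\rho_{14}|=0$, so the only active branch is $\mathcal{C}(t)=2\kappa(t)-(1-r)/2$; the assumption $r>\bar r=(1+4|ab|)^{-1}$ guarantees that at $t=0$ this expression is strictly positive, and the relation $\mathcal{B}(t)=2\sqrt{r^2+[\mathcal{C}(t)+(1-r)/2]^2}$ is the one to be proved exactly on the time interval where $\mathcal{C}(t)>0$. The $\rho_2$ case is identical with $\rho_{14}$ in place of $\rho_{23}$ and $\rho_{22},\rho_{33}$ in place of $\rho_{11},\rho_{44}$, giving the same formula $2\kappa(t)=\mathcal{C}(t)+(1-r)/2$. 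Substituting into $\mathcal{B}_1$ yields
\begin{equation}
\mathcal{B}(t)=2\sqrt{r^2+\left[\mathcal{C}(t)+\tfrac{1-r}{2}\right]^2},
\end{equation}
which is the claimed identity.

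The main obstacle I anticipate is not the algebra but the branch selection: one must make sure both that $\mathcal{B}_1$ (rather than $\mathcal{B}_2$) is the correct maximizer throughout the evolution and that the concurrence is realized by the coherence branch that matches the nonvanishing off-diagonal of the chosen EWL state, otherwise the elimination $2\kappa=\mathcal{C}+(1-r)/2$ fails. Both reduce to the same uniform bounds $|ab|\le 1/2$ and $|D_A D_B^{(*)}|\le 1$, so once these are stated carefully the rest of the argument is a direct substitution. It would also be worth pointing out that the formula only characterizes $\mathcal{B}(t)$ while $\mathcal{C}(t)>0$; once the state becomes separable, the Bell function drops below the plateau value $2\sqrt{r^2+(1-r)^2/4}$ predicted by the formal substitution $\mathcal{C}=0$, so the hypothesis $r>\bar r$ (entangled initial state) is essential to give the identity a nontrivial range of validity.
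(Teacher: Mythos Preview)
Your proposal is correct and follows essentially the same route as the paper's proof: propagate the EWL state under local pure dephasing (diagonals fixed, the single surviving coherence scaled by the product of local dephasing factors), evaluate $u_1,u_2,u_3$ on the resulting X state, show $\mathcal{B}_1\ge\mathcal{B}_2$ via $4\kappa^2\le r^2$, and eliminate $\kappa$ using $2\kappa=\mathcal{C}(t)+(1-r)/2$ on the interval where $\mathcal{C}(t)>0$. Your extra remarks on the branch selection and the role of the hypothesis $r>\bar r$ only sharpen the same argument rather than change it.
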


\begin{proof} 
Since the two qubits are noninteracting, the evolution of entanglement and nonlocality can be simply obtained from the
knowledge of single-qubit dynamics \cite{bellomo2007PRL}.
Under a pure-dephasing evolution, for each qubit the diagonal elements of the density matrix in the eigenstate basis
remain unchanged. The single-qubit coherences evolve in time as $q_S(t)\equiv\rho_{01}^S(t)/\rho_{01}^S(0)$, 
the explicit time dependence being specified by the environmental properties and the interaction term. 
If the system is subject to pure-dephasing only,  the X form of 
the density matrix is maintained during the evolution. 
In particular, diagonal elements remain constant while 
off-diagonal elements evolve in time. They 
are related to the single qubit coherences by  \cite{lofrancoreview,bellomo2007PRL}
$\rho_{23}(t)=\rho_{23}(0) q_A(t) q_B^\ast(t)$ for the initial state $\rho_1$ and 
$\rho_{14}(t)=\rho_{14}(0) q_A(t) q_B(t)$ for $\rho_2$.
From Eq.~(\ref{concxstate}), we immediately obtain the concurrences at time $t$ for the two initial states given in Eq.~(\ref{EWLstates}) as
$\mathcal{C}_{\rho_1}(t)=2\mathrm{max}\{0,|\rho_{23}(t)|-\sqrt{\rho_{11}(0)\rho_{44}(0)}\}$ and 
$\mathcal{C}_{\rho_2}(t)=2\mathrm{max}\{0,|\rho_{14}(t)|-\sqrt{\rho_{22}(0)\rho_{33}(0)}\}$.
For the pure-dephasing evolution, we then have
$\mathcal{C}_{\rho_1}(t)=\mathcal{C}_{\rho_2}(t) \equiv \mathcal{C}(t)$ with
\begin{equation}\label{Kt}
\mathcal{C}(t)=2\mathrm{max}\{0,r a\sqrt{1-a^2}|q_A(t)q_B(t)|-(1-r)/4\}.
\end{equation}
We now turn to nonlocality. For independent qubits subject to local pure-dephasing noise, the two functions $\mathcal{B}_1$, $\mathcal{B}_2$ defined in Eq. (\ref{BB1B2}) have the same form for the initial EWL states of Eq.~(\ref{EWLstates}) and are given by
\begin{eqnarray}\label{B1B2}
&\mathcal{B}_1(t)=2\sqrt{r^2+4r^2a^2(1-a^2)|q_A(t) q_B(t)|^2},&\nonumber\\
&\mathcal{B}_2(t)=4\sqrt{2}r|a|\sqrt{1-a^2}|q_A(t) q_B(t)|.&
\end{eqnarray}
Since $a,r,|q_A(t)|,|q_B(t)|\in[0,1]$, it is easily seen that $\mathcal{B}_1(t)\geq \mathcal{B}_2(t)$ for any $t$: the maximum Bell function 
is thus $\mathcal{B}(t)=\mathcal{B}_1(t)$. 
We now observe that nonzero entanglement ($\mathcal{C}(t)>0$) is necessary in order that the state exhibits nonlocality and we can simply write $\mathcal{C}(t)=2[r|a|\sqrt{1-a^2}|q_A(t)q_B(t)|-(1-r)/4]$. From the first line of Eq.~(\ref{B1B2}) with $\mathcal{B}(t)=\mathcal{B}_1(t)$, it is promptly found that the second term under square root is equal to $[\mathcal{C}(t)+(1-r)/2]^2$ so that the result of Eq.~(\ref{BversusC}) is proven. 
\end{proof}

We stress that this theorem is valid for any local pure-dephasing qubit-environment interaction, starting from 
an initial EWL state with a generic value of $a\neq 0,1$.
For instance, when $r=1$ Eq.~(\ref{BversusC}) reduces to $\mathcal{B}(t)=2\sqrt{1+\mathcal{C}(t)^2}$: the latter relation, which is known to occur when the system is in a pure state \cite{gisin1991PLA} or in a Bell-diagonal state \cite{verstraete2002PRL}, is here found to hold during the system evolution for more general states. 

As a consequence of the previous theorem, we get the following
\begin{lem}
The threshold value $\mathcal{C}_\mathrm{th}$ of the concurrence such that, when $\mathcal{C}(t)>\mathcal{C}_\mathrm{th}$, the evolved two-qubit state $\rho(t)$ exhibits nonlocality at time $t$ under any pure-dephasing evolution and for EWL initial states is
\begin{equation}\label{Cthreshold}
\mathcal{C}_\mathrm{th}=\sqrt{1-r^2}-(1-r)/2.
\end{equation}
\end{lem}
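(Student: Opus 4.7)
The plan is to derive the threshold directly from the dynamical relation established in Theorem \ref{TH1}, by requiring the CHSH-Bell function to exceed the local bound $2$ and then inverting the resulting algebraic condition for $\mathcal{C}(t)$. Since Theorem \ref{TH1} gives $\mathcal{B}(t)$ as a strictly increasing function of $\mathcal{C}(t)$ (for fixed $r$), the threshold on $\mathcal{B}$ maps uniquely onto a threshold on $\mathcal{C}$, so the computation is essentially algebraic and no further dynamical input is needed.

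Concretely, I would start from $\mathcal{B}(t)=2\sqrt{r^2+[\mathcal{C}(t)+(1-r)/2]^2}$ and impose the nonlocality condition $\mathcal{B}(t)>2$. Squaring and rearranging gives $[\mathcal{C}(t)+(1-r)/2]^2 > 1-r^2$. Because both $\mathcal{C}(t)\geq 0$ and $(1-r)/2\geq 0$ for $r\in[0,1]$, the quantity inside the square on the left is nonnegative, so taking square roots preserves the inequality and selects the positive branch, yielding $\mathcal{C}(t)+(1-r)/2>\sqrt{1-r^2}$. Isolating $\mathcal{C}(t)$ then delivers exactly $\mathcal{C}(t)>\sqrt{1-r^2}-(1-r)/2=\mathcal{C}_\mathrm{th}$.

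The only non-routine point, and the one I would flag as the main (though mild) obstacle, is justifying the sign choice when extracting the square root and checking that the resulting threshold is physically meaningful. For this, I would observe that $1-r^2=(1-r)(1+r)\geq[(1-r)/2]^2$ for all $r\in[0,1]$ (since $1+r\geq(1-r)/4$ trivially holds in this range), which guarantees $\mathcal{C}_\mathrm{th}\geq 0$, with equality at $r=1$; and that $\mathcal{C}_\mathrm{th}\leq 1$, so the threshold always lies in the admissible range of the concurrence. Finally, because the starting relation in Theorem \ref{TH1} is an equality and not merely a bound, the implication is actually an equivalence, $\mathcal{C}(t)>\mathcal{C}_\mathrm{th}\Leftrightarrow\mathcal{B}(t)>2$, which confirms that $\mathcal{C}_\mathrm{th}$ is sharp.
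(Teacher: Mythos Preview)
Your proposal is correct and follows precisely the paper's own approach: the paper's proof is the single sentence that the result ``immediately follows from Eq.~(\ref{BversusC}) by solving the tight inequality $\mathcal{B}(t)>2$ with respect to concurrence $\mathcal{C}(t)$,'' and you have simply spelled out that algebra, together with the (correct) justification of the positive-branch choice and the check that $\mathcal{C}_\mathrm{th}\in[0,1]$.
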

\begin{proof} The demonstration immediately follows from Eq.~(\ref{BversusC}) by solving the tight inequality $\mathcal{B}(t)>2$ with respect to concurrence $\mathcal{C}(t)$. 
\end{proof}

We name $\mathcal{C}_\mathrm{th}$ nonlocality threshold. Therefore, for initial EWL states evolving under any pure-dephasing interaction, the system exhibits nonlocality at time $t$ provided that the concurrence $C(t)$ is larger than a threshold value $\mathcal{C}_\mathrm{th}$ depending only on the system initial purity. The threshold is a decreasing function of the purity and for $r=1$ it is $\mathcal{C}_\mathrm{th}=0$: this implies that if the two qubits are initially in a pure entangled Bell-like state, their correlations remain nonlocal until any amount of nonzero entanglement is present. In the following we exploit these results in a specific system where they are particularly relevant.

\section{Evolution under an environment with Ohmic class spectrum}
We now apply the previous results in the dynamics of entanglement between two initially correlated but noninteracting qubits $A$ and $B$, each undergoing local pure-dephasing due to the coupling with a zero-temperature bosonic environment having a Ohmic-like spectrum, whose characteristics do not depend on the qubit \cite{haikkaPRA}. In this case, the explicit form of the Hamiltonian of Eq.~(\ref{pure-dephasingH}) ruling the dynamics of each qubit is ($\hbar=1$)
\begin{equation}\label{HOhmic}
H=\omega_0\sigma_z + \sum_k\omega_k a_k^\dagger a_k + \sigma_z \sum_k \left(g_k a_k + g_k^*a_k^\dagger \right),
\end{equation}
where $\omega_0$ is the qubit frequency, $\omega_k$ the frequencies of the reservoir modes, $\sigma_z$ the Pauli operator along the $z$-direction, $a_k$ the bosonic annihilation operators, $a_k^\dagger$ the bosonic creation operators and $g_k$ the coupling constants between the qubit and each reservoir mode. 
It is readily seen that the above Hamiltonian of Eq.~(\ref{HOhmic}) is obtained by the general pure-dephasing Hamiltonian of Eq.~(\ref{pure-dephasingH}) by substituting $\Omega=2\omega_0$, $H_R=\sum_k\omega_k a_k^\dagger a_k$ and $\hat{\chi}=2\sum_k \left(g_k a_k + g_k^*a_k^\dagger \right)$.
In the continuum limit one has \cite{petru} $\sum_k {\left|g_k \right|}^2\rightarrow \int d\omega J(\omega)\delta(\omega_k - \omega)$, where $J(\omega)$ is the reservoir spectral density which in this case is given by
\begin{equation}\label{spectrum}
J(\omega)=\frac{\omega^s}{\omega_c^{s-1}}e^{-\omega/\omega_c},
\end{equation}
with $\omega_c$ denoting the cut-off reservoir frequency ($\omega_c$ is assumed equal for the two independent environments). This qubit-environment system is exactly solvable in the case of zero temperature \cite{addisPRA,haikkaPRA}. In particular, the single-qubit decay characteristic function $q(t)=\rho_{01}(t)/\rho_{01}(0)$ at zero temperature is given by $q(t)=e^{-\Lambda(t)}$, with dephasing factor 
\begin{equation}\label{Lambda}
\Lambda(t)=2\int_0^t\gamma(t')dt',
\end{equation}
where $\gamma(t)$ is the dephasing rate of the single-qubit evolution \cite{haikkaPRA}
\begin{equation}\label{gamma}
\gamma(t)=\omega_c {\left[1+{\left(\omega_c t \right)}^2  \right] }^{-s/2}\Gamma[s]\sin\left[s \arctan(\omega_c t ) \right],
\end{equation}
with $\Gamma[x]$ being the Euler Gamma function. The parameter $s$ of the spectrum of Eq.~(\ref{spectrum}) governs the character of the dynamics: in particular, for $0\leq s \leq1$ the dynamics is Markovian (memoryless) while for $s>1$ it is non-Markovian (with memory), as indicated by the time behavior of the single-qubit coherence $q(t)=e^{-\Lambda(t)}$ \cite{addisPRA,haikkaPRA}. In fact, for $0\leq s \leq1$ the coherence vanishes asymptotically while for $s>1$ it reaches a stationary value which is not consistent with a simple Markovian approximation. Moreover, for $s>2$ an information backflow from the environment to the qubit occurs \cite{addisPRA}. 

If the two-qubit are initially prepared in an EWL state of Eq.~(\ref{EWLstates}), then the concurrence evolves as in Eq.~(\ref{Kt}) by substituting the single-qubit decay characteristic functions $q_S(t)=e^{-\Lambda_S(t)}$ ($S=A,B$), where each $\Lambda_S(t)$ is defined in Eq.~(\ref{Lambda}).
In general, the single-qubit decay function $q(t)$ can be different for the two qubits, $q_A(t)$, $q_B(t)$ depending on the values of the parameter $s_A$, $s_B$, respectively. In the following we shall study the case of both qubits locally subject to identical noise (that is, $q_A(t)=q_B(t)=q(t)$) and the situation where a qubit, for instance $B$, is isolated so that $q_B(t)=1$ at any time.

A first analysis of the entanglement dynamics for some values of the parameters is displayed in Fig.~\ref{fig:concevo}. In particular, for both qubits under dephasing we observe that for purity parameter $r=0.9$ and weights $a=1/\sqrt{2}$ (left panel), the entanglement disappears at a finite time, with the faster decay occurring for the larger values of the ohmicity parameter $s$ which indicate stronger single-qubit non-Markovianity. Therefore, under these initial conditions non-Markovian features do not help maintaining entanglement but are detrimental. On the other hand, for larger values of the purity of the initial state, stationary nonzero entanglement is found, as displayed in the right panel of Fig.~\ref{fig:concevo} for $r=1$. We point out how the case where a qubit is isolated is advantageous for the entanglement preservation, as expected. Notice also that, for $r=1$, the nonlocality threshold for the concurrence obtained from Eq.~(\ref{Cthreshold}) is zero, so that this stationary entanglement ensures the presence of nonlocal correlations. 

\begin{figure}[tbp]
\begin{center}
{\includegraphics[width=0.46\textwidth]{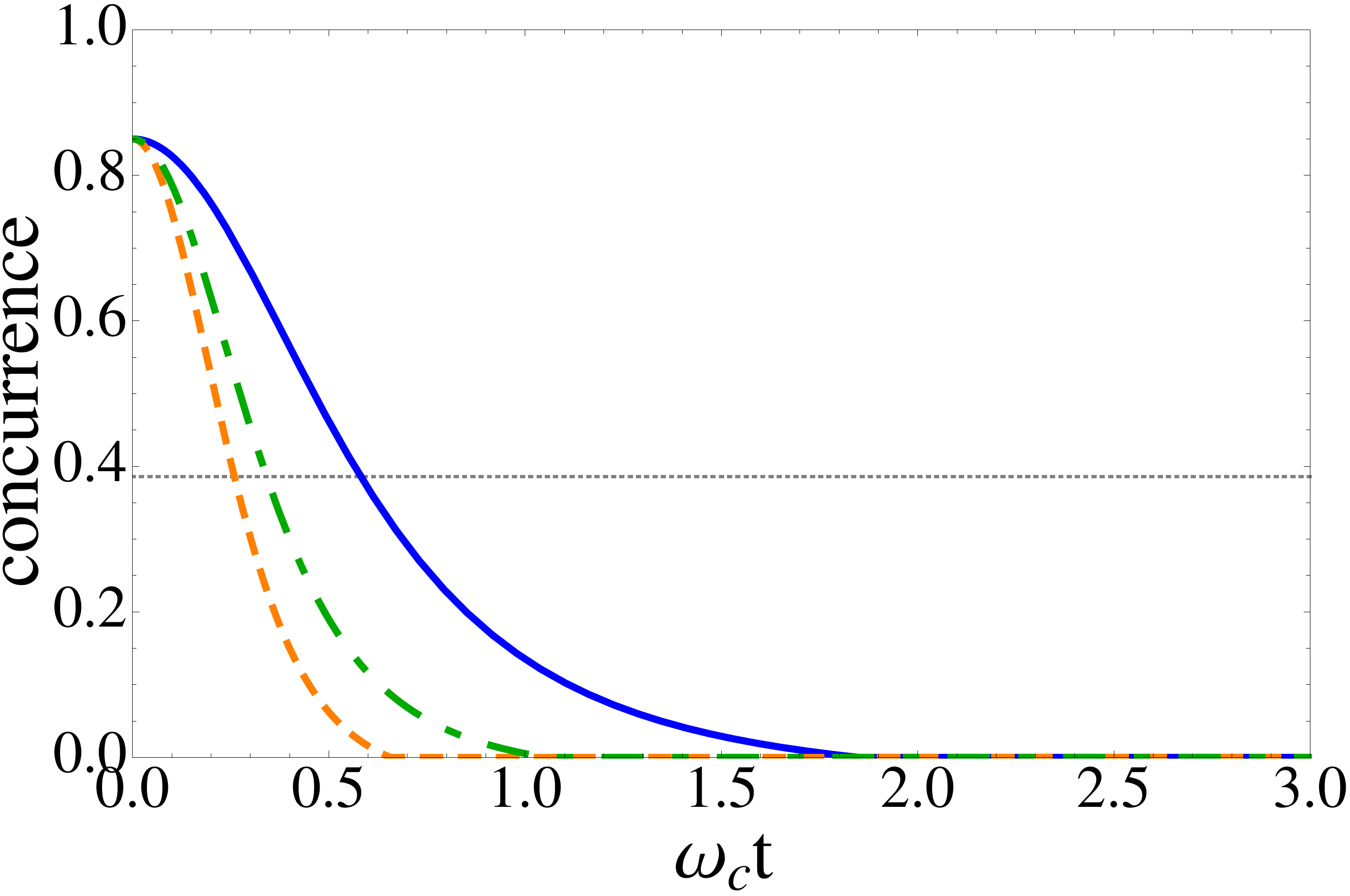}\hspace{0.2 cm}
\includegraphics[width=0.46\textwidth]{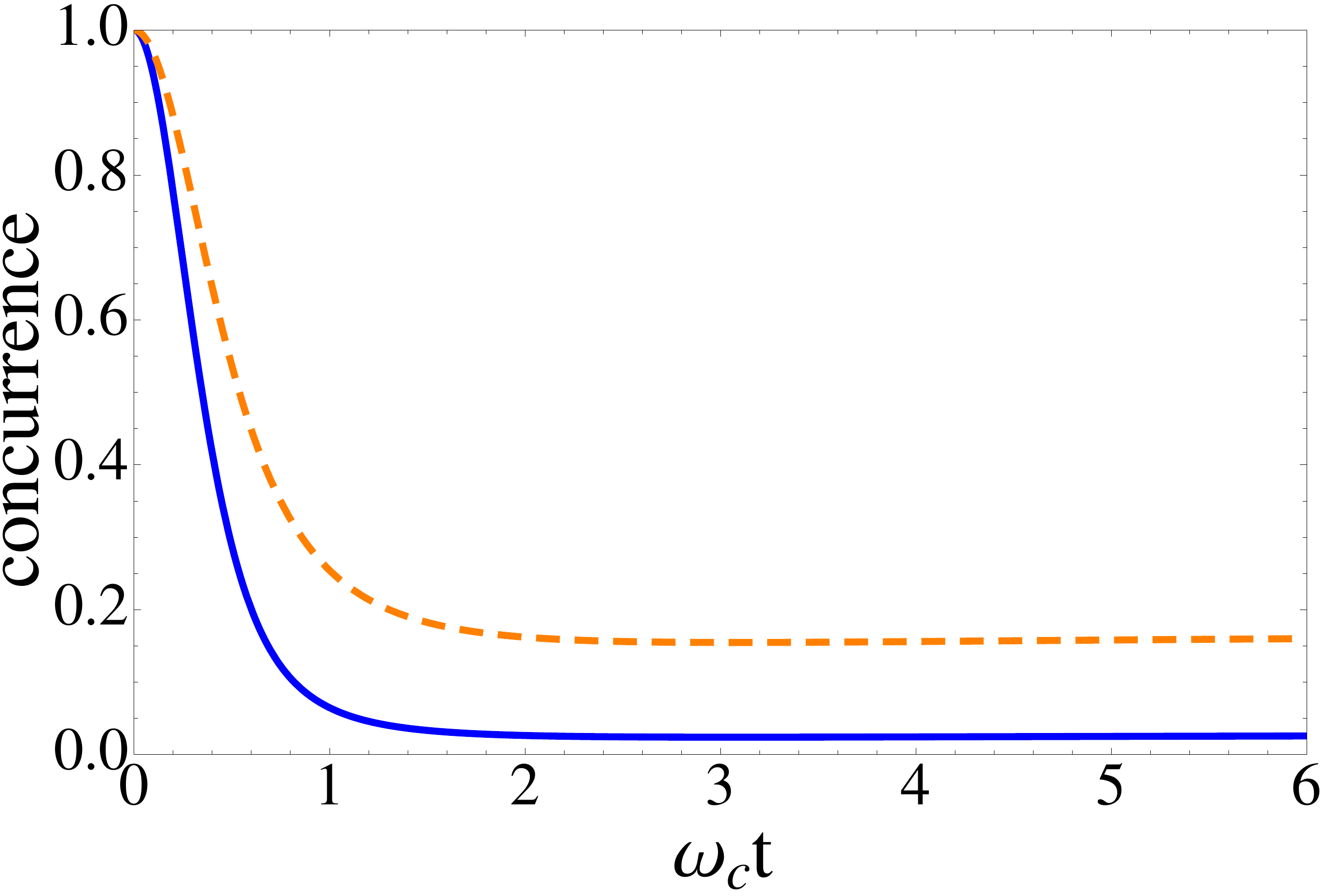}}
\end{center}
\caption{\footnotesize Dynamics of the two-qubit concurrence starting from Werner states ($a=b=1/\sqrt{2}$), for different parameter conditions. 
\textbf{Upper panel:} The purity parameter is $r=0.9$ for both the three curves and $s_A=s_B=1.5$ (solid blue line), $s_A=s_B=3$ (dashed orange line), $s_A=1.5$, $s_B=3$ (green dot-dashed line). The gray dotted line represents the nonlocality threshold for the concurrence ($\mathcal{C}_\mathrm{th}=0.386$).    
\textbf{Lower panel:} The purity parameter is $r=1$ and $s_A=s_B=2.5$ (solid blue line), $s_A=2.5$ while qubit $B$ is isolated (dashed orange line). In this case the nonlocality threshold is zero and entanglement trapping is found.}
\label{fig:concevo}
\end{figure}

The emergence of this stationary entanglement, which did not surface in a previous analysis of the same model due to the initial conditions there considered \cite{cianciarusoJPA}, strictly stems from the trapping of the single-qubit coherence \cite{addisPRA}. In fact, an analytic expression of the dephasing factor of Eq.~(\ref{Lambda}) can be obtained by solving the integral which gives
\begin{eqnarray}
\Lambda(\tau)&=&\frac{2 \Gamma (s)}{s-1}\left\{1-\left(\tau ^2+1\right)^{-\frac{s}{2}} 
\left[\tau  \sin \left(s \arctan\tau \right)\right. \right. \nonumber \\
&&\left.\left. +\cos \left(s \arctan\tau \right]\right)\right\},
\end{eqnarray}
where $\tau = \omega_c t$ is the dimensionless time in units of $\omega_c^{-1}$. It is then straightforward to see that, in the limit of $\tau$ to infinity, one has
\begin{equation}\label{Lambdainfty}
\Lambda^{\infty}(s)=\lim_{\tau\rightarrow\infty} \Lambda(\tau) = 
\left\{\begin{array}{ll} +\infty, & 0\leq s \leq1\\ \\ 2 \Gamma (s)/(s-1), & s>1 \end{array}\right.
\end{equation}
Therefore, the single-qubit coherence $q(t)=e^{-\Lambda(t)}$ decays to zero for values of $s\leq1$, while for any value of $s>1$ it achieves a stationary value $q^\infty (s) = \exp\{-\Lambda^\infty(s)\}$ depending on $s$. A plot of the asymptotic dephasing factor $\Lambda^{\infty}(s)$ as a function of $s$ is displayed in Fig.~\ref{fig:Lambda}.
It is seen that $\Lambda^{\infty}(s)$ is a convex function exhibiting its minimum value at $\bar{s}\simeq 2.46$ where $\Lambda^{\infty}(\bar{s})\simeq 1.77$. The value $\bar{s}$ is thus the value of the ohmicity parameter $s$ which maximizes the single-qubit coherence \cite{addisPRA}.   

\begin{figure}[tbp]
\begin{center}
{\includegraphics[width=0.5\textwidth]{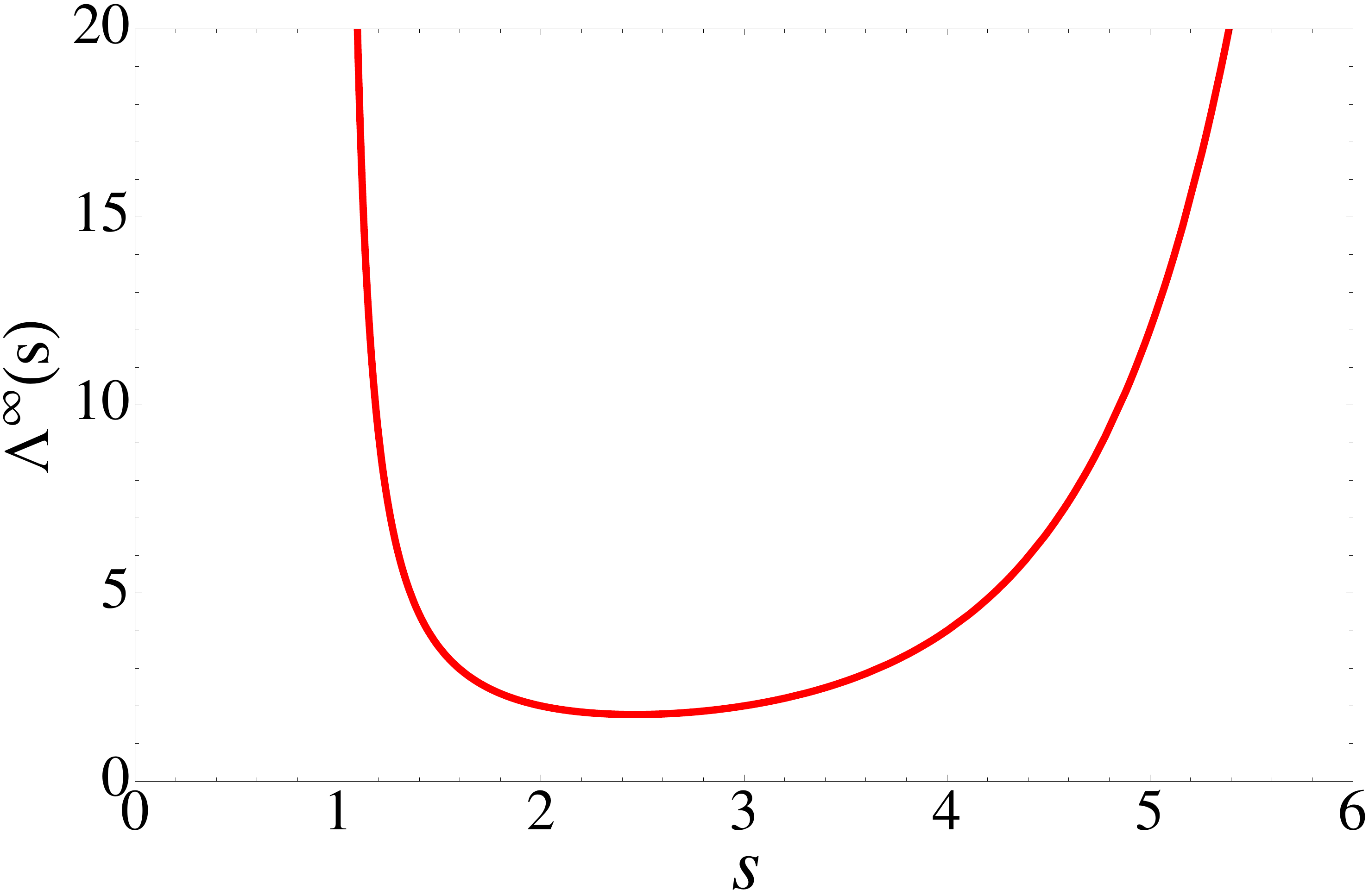}}
\end{center}
\caption{\footnotesize Asymptotic value, $\Lambda^{\infty}$, of the decay function $\Lambda(t)$ as a function of the spectrum parameter $s$. The minimum value of $\Lambda^{\infty}$ is attained for $\bar{s}\simeq 2.46$ which gives $\Lambda^{\infty}(\bar{s})\simeq 1.77$.}
\label{fig:Lambda}
\end{figure} 

However, the fact that for any $s>1$ a stationary coherence occurs does not entail that there is correspondingly a nonzero stationary entanglement. In fact, as we have already seen in Fig.~\ref{fig:concevo}, the occurrence of entanglement trapping strongly depends on the value of the purity $r$ of the initial EWL state. The analytic expression of the asymptotic concurrence is immediately obtained by Eq.~(\ref{Kt}) as
\begin{equation}\label{concasympt}
\mathcal{C}^\infty(s)=2\mathrm{max}\{0,ra\sqrt{1-a^2}\ e^{-n\Lambda^\infty(s)}-(1-r)/4\},
\end{equation}
where $\Lambda^\infty(s)$ is given in Eq.~(\ref{Lambdainfty}) and $n=1,2$ for the case when one qubit is isolated or both qubits are subject to noise, respectively. As seen from the above expression, the value of $\bar{s}$ which maximizes the stationary single-qubit coherence also maximizes the stationary concurrence. For $s\leq1$ one always get $\mathcal{C}^\infty(s)=0$. 
The values of $s>1$ such that $\mathcal{C}^\infty(s)>0$ for given $a$, $r$ can be easily obtained using Eq.~(\ref{concasympt}) and are determined by the inequality
\begin{equation}
\Lambda^{\infty}(s)=\frac{2 \Gamma (s)}{s-1}<\frac{1}{n}\ln[4ra\sqrt{1-a^2}/(1-r)].
\end{equation}
It is immediately seen that for $r=1$ and $a\neq0,1$ the right-hand side of the above inequality diverges to infinity and therefore the latter is satisfied for any value of $s>1$. Two plots of $\mathcal{C}^\infty(s)$ as a function of $s$ for $a=1/\sqrt{2}$ and two values of the purity parameter $r=1,0.99$ are shown in Fig~\ref{fig:concinfty}. For pure initial states such that $r=1$ (left panel) the nonlocality threshold is zero and $\mathcal{C}^\infty(s)>0$ for any value of $s>1$; for $r=0.99$ (right panel) the stationary value of concurrence is larger than zero for a finite range of $s$, but only for the case when a qubit is isolated this concurrence is able to overcome the nonlocality threshold $\mathcal{C}_\mathrm{th}=0.136$ for a smaller range of $s$ around $\bar{s}$.     

\begin{figure}[tbp]
\begin{center}
{\includegraphics[width=0.46\textwidth]{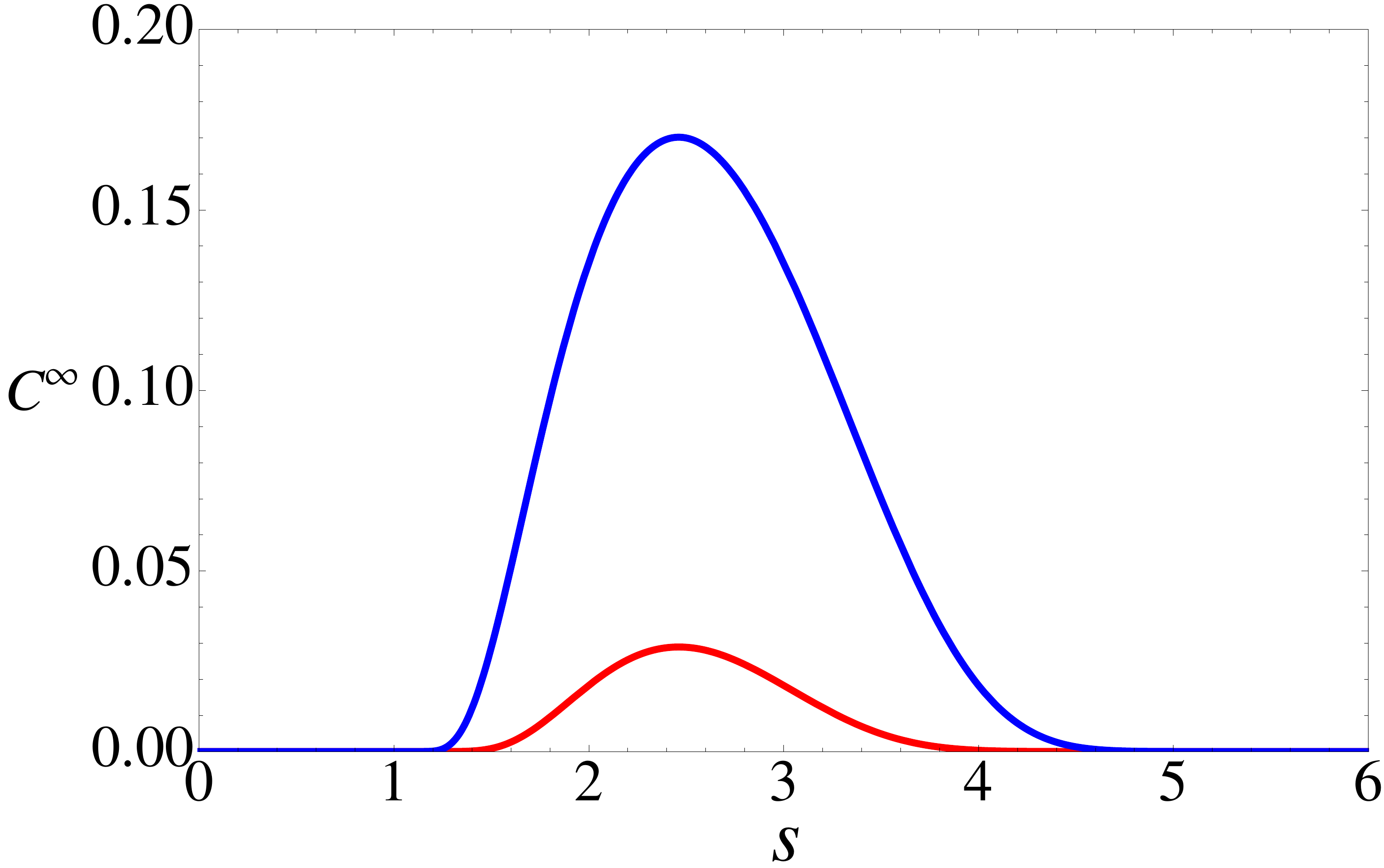}\hspace{0.2 cm}
\includegraphics[width=0.46\textwidth]{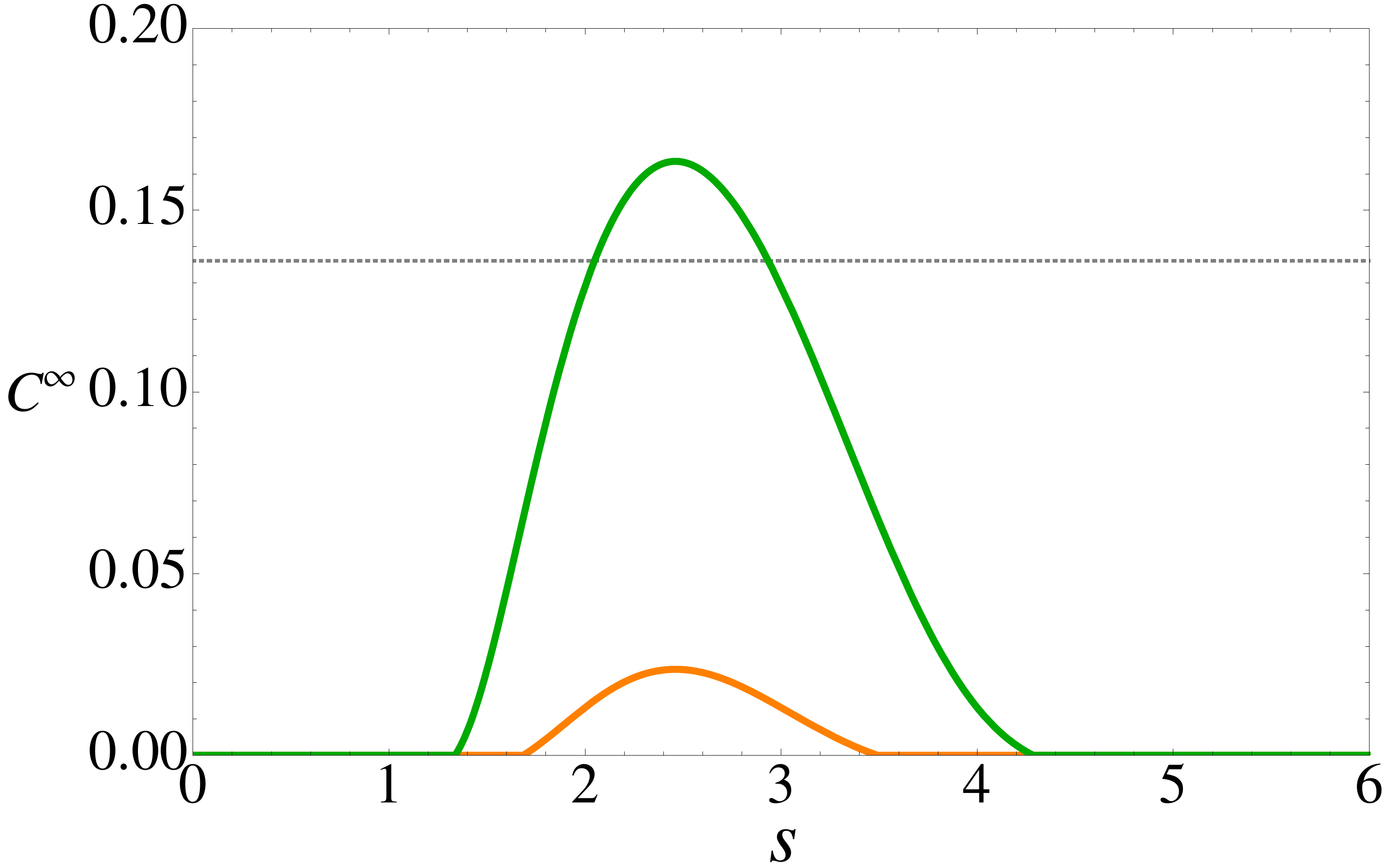}}
\end{center}
\caption{\footnotesize Asymptotic values of two-qubit concurrence starting from Werner states ($a=b=1/\sqrt{2}$) as a function of the ohmicity parameter $s$. In both panels the lower curve represents the concurrence for the case of both qubits subject to noise, while the upper curve is the concurrence for the case where a qubit is isolated. \textbf{Upper panel:} The purity parameter is $r=1$ and the nonlocality threshold is zero.  
\textbf{Lower panel:} The purity parameter is $r=0.99$ and the gray dotted line represents the nonlocality threshold for the concurrence ($\mathcal{C}_\mathrm{th}=0.136$). The upper curve, that is the stationary concurrence when a qubit is isolated, is larger than the nonlocality threshold within the range $2.06<s<2.94$.}
\label{fig:concinfty}
\end{figure}

So far, the results obtained tell us that, for $s>1$, larger values of the initial purity $r$ are more effective in order to get stationary entanglement which enables the presence of nonlocal correlations. We can make this qualitative behavior quantitative by finding the purity lower bound $r^\ast$ of the initial EWL states of Eq.~(\ref{EWLstates}) such that, once assigned the value of the weight $a$ and of the ohmicity parameter $s$, the stationary concurrence is above the nonlocality threshold for any $r>r^\ast$. This finding will permit us to know to which degree of purity the initial EWL entangled state must be prepared in order to guarantee that the two qubits will preserve nonlocality during the evolution. Exploiting the nonlocality threshold of Eq.~(\ref{Cthreshold}) together with the stationary concurrence of Eq.~(\ref{concasympt}) we can solve the inequality $\mathcal{C}^\infty>\mathcal{C}_\mathrm{th}$ for $s>1$ with respect to $r$ in order to obtain the nonlocality lower bound of purity
\begin{equation}\label{rast}
r^\ast=\sqrt{\frac{1}{1+4 \left(1-a^2\right) a^2 \exp\{-2n \Lambda^{\infty}(s)\}}},
\end{equation} 
where again $\Lambda^\infty(s)$ is given in Eq.~(\ref{Lambdainfty}) and $n=1,2$ corresponds to the case when one qubit is isolated or both qubits are subject to noise, respectively. From this equation, it is easily found that when $\Lambda^{\infty}(s)\rightarrow +\infty$ one has $r^\ast=1$, which means that for $s\leq1$ the initial entangled state must be pure in order to have nonlocal correlations preserved during the dynamics. In Fig.~\ref{fig:purity} the plots of $r^\ast$ for $n=1,2$ are displayed as a function of $s$ for $a=1/\sqrt{2}$ (left panel) and as a function of the weight $a^2$ for $s=\bar{s}$ (right panel). The plots show the sensitivity of the nonlocality purity lower bound $r^\ast$ to the system configuration, with very high values of initial purity of the EWL state required if both qubits are subject to noise. As expected, from Eq.~(\ref{rast}) it is promptly found that the minimum value of $r^\ast$ is obtained for $s=\bar{s}$ and $a=1/\sqrt{2}$ which are, respectively, the value of $s$ minimizing the stationary dephasing factor and the value of weight $a$ maximizing the entanglement of the pure part of the initial EWL state of Eq.~(\ref{EWLstates}). This result is confirmed by comparing the left and right panels of Fig.~\ref{fig:purity}. In particular, the minimum values of the nonlocality lower bound $r^\ast$ for $n=1,2$ are, respectively, $r^\ast_\mathrm{min}(n=1)\simeq 0.9858$ and $r^\ast_\mathrm{min}(n=2)\simeq 0.9996$. We point out that, despite these lower bounds of initial purities are very high, they are achievable for instance in all-optical setups with entangled polarized photons where the open quantum dynamics can be simulated by suitable quartz plates \cite{xu2013,orieux2015,chiuri2012}, in solid-state qubits (nuclear spins) in diamond \cite{pfaff2013} and in principle even in composite atom-cavity systems by purity swapping \cite{metwally2008}. Values of initial state purity smaller than the above lower bounds $r^\ast_\mathrm{min}$ will however lead to an entanglement evolution above the nonlocality threshold for long enough dimensionless times $\omega_c t$, as can be deducted from Fig.~1.

\begin{figure}[tbp]
\begin{center}
{\includegraphics[width=0.46\textwidth]{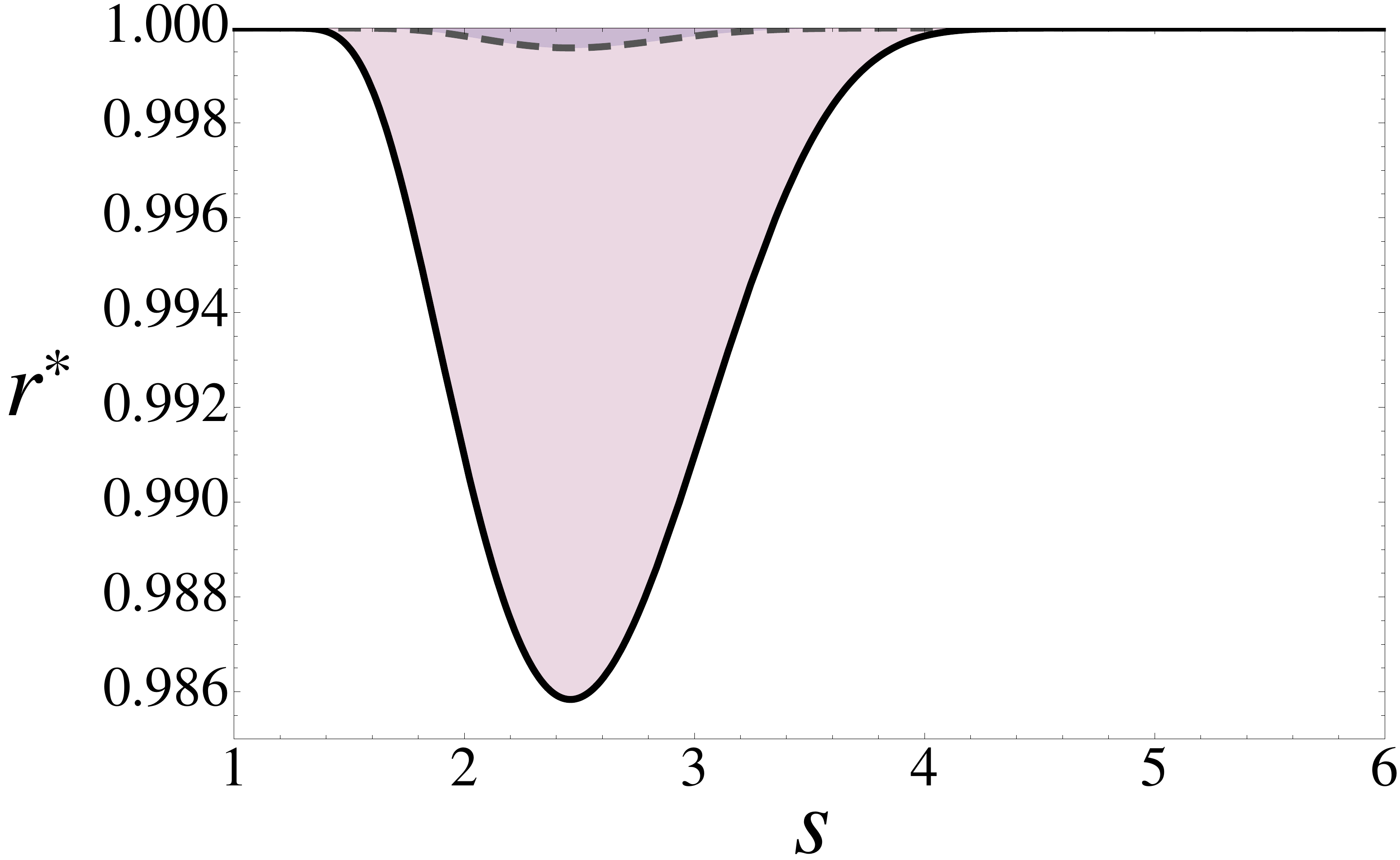}\hspace{0.2 cm}
\includegraphics[width=0.46\textwidth]{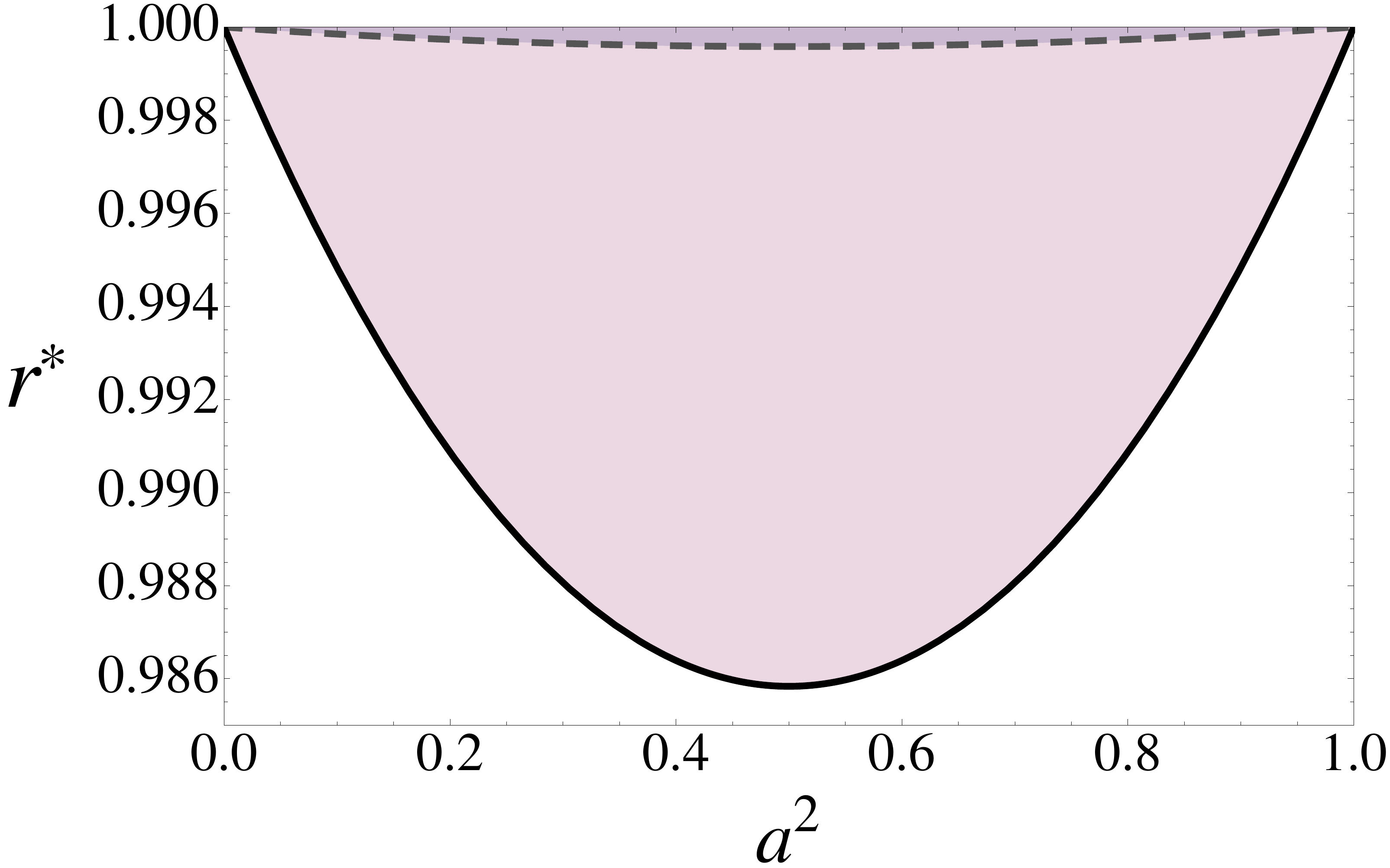}}
\end{center}
\caption{\footnotesize Lower bound $r^\ast$ of the purity parameter $r$, such that for any $r>r^\ast$ the asymptotic concurrence is larger than the nonlocality threshold ($\mathcal{C}^{\infty}>\mathcal{C}_\mathrm{th}$). \textbf{Upper panel:} $r^\ast$ as a function of the ohmicity parameter $s$ for $|a|=1/\sqrt{2}$. \textbf{Lower panel:} $r^\ast$ as a function of the weight $a^2$ for ohmicity parameter $s=\bar{s}\simeq2.46$. The upper dashed curve is for both qubits subject to noise while the lower solid one is for the case where a qubit is isolated.}
\label{fig:purity}
\end{figure}

\section{Conclusion}
In this paper we have investigated the possibility to maintain, during a pure-dephasing evolution, an amount entanglement capable to ensure the presence of nonlocal correlations within a two-qubit system. To this purpose, local pure-dephasing environments characterized by a Ohmic-like spectral density have been taken into account. We have reviewed a known closed formula which dynamically relates the Bell function $\mathcal{B}$, witnessing nonlocality, and the concurrence $\mathcal{C}$, quantifying entanglement, for arbitrary local pure-dephasing evolutions \cite{lofrancoPRB}, giving a formal demonstration of it. This relation, which permits to find a threshold $\mathcal{C}_\mathrm{th}$ for the concurrence such that for any value of $\mathcal{C}>\mathcal{C}_\mathrm{th}$ the system state exhibits nonlocality, has been then exploited within the model with Ohmic-like spectrum. 

We have shown that stationary entanglement is achievable for a given degree of ohmicity $s$ and for a suitable purity $r$ of the initial state. In particular, we have found the analytical expression of the lower bound of initial purity $r^\ast$ such that, for a given $s$, the stationary entanglement guarantees nonlocality for any value of purity of the initial state $r>r^\ast$.   
We remark that, although the most performing condition for entanglement trapping happens for a value of $s$, namely $\bar{s}\simeq2.46$, which enables a non-Markovian regime with the occurrence of local information backflows from the environment to the system \cite{addisPRA}, the achievement of stationary entanglement is simply related to the existence of super-Ohmic environments ($s>1$) for which the Markovian approximation fails and non-Markovian effects arise even without information backflows \cite{addisPRA}.    
We also point out that stationary entanglement was obtained for two noninteracting qubits locally subject to amplitude damping channels each with a super-Ohmic spectral density \cite{non-Mar2}, where the mechanism leading to this phenomenon is linked to the presence of a bound state between each qubit with its local dissipative bosonic environment. Another recent cavity-based architecture has been reported which is able in principle to give stationary entanglement in a dissipative compound environment configuration \cite{man2015}. In the system considered in this paper, instead, the pure-dephasing interaction is non-dissipative and no qubit-environment bound state can be created. The origin of the stationary entanglement, activating nonlocality, is here basically due to the manifestation of coherence trapping for the localized qubits.

The results presented in this work highlight the importance of knowing closed relations between different kinds of quantum correlations, as entanglement and nonlocality. Indeed this knowledge allows for finding preparation conditions of the initial states, under opportune environmental characteristics, which ensure that a given kind of quantum correlations is preserved in the long-time limit, for instance nonlocal correlations, just by maintaining another type of correlations, e.g. quantum entanglement, above a certain amount. Such findings can count not only for fundamental studies but also for practical purposes in quantum communication and information processing.

\begin{acknowledgements}
R.L.F. would like to acknowledge Gerardo Adesso and his group at the University of Nottingham for the very kind hospitality during the completion of this work. R.L.F. also thanks Sigur R\'{o}s for fruitful meetings. 
\end{acknowledgements}

\end{document}